\documentclass[conference]{IEEEtran}

\usepackage[cmex10]{amsmath}
\usepackage{amsthm,amsfonts,amssymb,mathtools,mathdots}

\usepackage{enumitem, framed}
\PassOptionsToPackage{svgnames}{xcolor}
\usepackage{tikz}
\usepackage{stmaryrd, tikz-cd, braket}


\newtheorem{thm}{Theorem}

\newenvironment{rul}[1]
  {\innercustomthm}
  {\endinnercustomthm}
\newtheorem{cor}[thm]{Corollary}
\newtheorem{prop}[thm]{Proposition}
\newtheorem{lem}[thm]{Lemma}
\theoremstyle{definition}
\newtheorem{dfn}{Definition}
\theoremstyle{remark}
\newtheorem{remark}{Remark}


\newcommand\restr[2]{{
  \left.\kern-\nulldelimiterspace 
  #1 
  \vphantom{\big|} 
  \right|_{#2}
  }}

\newcommand\ketbra[2]{\ket{#1} \! \! \bra{#2}}


\usetikzlibrary{decorations.markings}
\usetikzlibrary{shapes.geometric}
\usetikzlibrary{patterns}

\pgfdeclarelayer{edgelayer}
\pgfdeclarelayer{nodelayer}
\pgfsetlayers{edgelayer,nodelayer,main}

\tikzstyle{none}=[inner sep=0pt]

\tikzstyle{white}=[circle,fill=White,draw=Black, opacity=0.7, inner sep=0.16em]
\tikzstyle{black}=[circle,fill=Black, inner sep=0.16em]
\tikzstyle{black_fade}=[circle,fill=Black, inner sep=0.16em, opacity=0.25]
\tikzstyle{white_fade}=[circle,fill=White,draw=Black, opacity=0.25, inner sep=0.16em]
\tikzstyle{trian}=[regular polygon,regular polygon sides=3,shape border rotate=0,fill=White, draw=Black, opacity=0.7, inner sep=0.12em]
\tikzstyle{square}=[rectangle,fill=White,draw=Black, opacity=0.7, inner sep=0.24em]

\tikzstyle{simple}=[-,draw=Black, opacity=0.7]
\tikzstyle{simple_fade}=[-,draw=Black, opacity=0.25]
\tikzstyle{simple_b}=[-, shorten >=3pt, draw=Black, opacity=0.7]
\tikzstyle{twohead_b}=[-, shorten >=3pt, shorten <=3pt, draw=Black, opacity=0.7]
\tikzstyle{simple_b_fade}=[-, shorten >=3pt, draw=Black, opacity=0.25]
\tikzstyle{twohead_b_fade}=[-, shorten >=3pt, shorten <=3pt, draw=Black, opacity=0.25]
\tikzstyle{arrow}=[->, shorten >=3pt, shorten <=3pt, dash pattern=on 3pt off 3pt, draw=Black, opacity=0.4]

\tikzset{every picture/.style={scale=0.8}}


\begin{document}

\title{A Diagrammatic Axiomatisation \\
for Qubit Entanglement}

\author{\IEEEauthorblockN{Amar Hadzihasanovic}
\IEEEauthorblockA{University of Oxford, Department of Computer Science\\
Wolfson Building, Parks Road, OX1 3QD Oxford\\
\emph{amarh@cs.ox.ac.uk}}}

\maketitle

\begin{abstract}
Diagrammatic techniques for reasoning about monoidal categories provide an intuitive understanding of the symmetries and connections of interacting computational processes. In the context of categorical quantum mechanics, Coecke and Kissinger suggested that two 3-qubit states, GHZ and W, may be used as the building blocks of a new graphical calculus, aimed at a diagrammatic classification of multipartite qubit entanglement that would highlight the communicational properties of quantum states, and their potential uses in cryptographic schemes.

In this paper, we present a full graphical axiomatisation of the relations between GHZ and W: the ZW calculus. This refines a version of the preexisting ZX calculus, while keeping its most desirable characteristics: undirectedness, a large degree of symmetry, and an algebraic underpinning. We prove that the ZW calculus is complete for the category of free abelian groups on a power of two generators - ``qubits with integer coefficients'' - and provide an explicit normalisation procedure. 
\end{abstract}

\section{Introduction}

After a certain number of quantum systems have interacted with each other, the results of observations on the individual systems may present correlations that cannot possibly be explained by their local features (a ``hidden variable theory''). This phenomenon is called \emph{quantum non-locality}.

Even though it is appealing to see these correlations as a form of ``instantaneous communication'' between systems - whereby it is the actions of one observer that inform the observations of another, however distant they may be - no information can actually be transmitted from one location to the other, in this way. 

In a broader sense, however, communication is about obtaining to share some knowledge - and these \emph{entangled} states can be used as generators of shared information. This is the idea behind entanglement-based key exchange protocols, pioneered by Ekert's E91 \cite{ekert1991quantum}.

Arguably, the kind of information that we could expect to share from a distance can all be digitised. Hence, for purposes of communication, we can restrict our attention to \emph{qubits}: quantum systems that only accept yes/no questions.

When only two users are involved, there is not much else that one can desire, besides perfect correlation. With three or more users, however, a variety of scenarios may arise. Suppose the third user decides to not cooperate: are the other two still to obtain some sharing of information, or should they remain empty-handed?

It turns out that there exists a 3-qubit state, the GHZ state \cite{greenberger1990bell}, for which an uncooperating user results in the other two being disconnected, and another 3-qubit state, the W state, where a communication channel persists between any pair of users. So, one is faced with the following problem:
\begin{itemize}
	\item \emph{find a classification of $n$-qubit entangled states which reflects their different \emph{communicational} properties, and potential uses in quantum cryptography.} 
\end{itemize}

Clearly, we should allow for some local ``pre-processing'' by individual users, prior to measurement: as long as the performed operations are invertible, this can be seen just as a translation of the system into their preferred \emph{format}, not affecting the communication.

If we ask that this pre-processing be a part of the protocol, that is, pre-determined, we obtain the so-called \emph{LOCC classification} of quantum states (\emph{Local Operations and Classical Communication}). If we only want users to perform the ``correct'' pre-processing with non-zero probability, we obtain the \emph{SLOCC classification} of quantum states, which corresponds to allowing arbitrary invertible local operations on the systems \cite{dur2000three}.

For 2 qubits, only two SLOCC classes exist, corresponding to the ``connected'' and ``disconnected'' cases, respectively. For 3 qubits, the GHZ and W states are representatives of the only two classes of connected states. For $n\geq 4$ qubits, however, there are \emph{infinite} classes, as can be shown by a simple count of degrees of freedom \cite{dur2000three}, and only inductive classifications in super-classes, with little insight about operational behaviour, are known \cite{lamata2006inductive, lamata2007inductive}. 

Quantum systems satisfy a property called \emph{map-state duality}: it is possible to turn any input of a quantum operation into an output, and vice versa, so that - for instance - any tripartite state may also be seen as a \emph{binary operation}. In \cite{coecke2010compositional}, Coecke and Kissinger showed that the GHZ and W states correspond, as binary operations, to certain \emph{Frobenius algebras} - in a particular sense, the only two possible kinds of Frobenius algebras on qubits. Moreover, as quantum gates, together with single-qubit states, they are universal for quantum computing, which suggested they could be used as \emph{building blocks} for a compositional classification of multipartite entanglement. 

Coecke and Kissinger formulated their result in the framework of \emph{categorical quantum mechanics} (CQM), initiated by \cite{abramsky2004categorical}. CQM heavily relies on \emph{string diagrams} as a graphical language for monoidal categories \cite{selinger2011survey}: while the latter are a natural home to the formalisation of computational processes and their interactions, the diagrams provide a high-level language for reasoning about them, which bypasses some of the bookkeeping that is associated with algebraic category theory, allowing one to focus on the connections, and the flow of information between such processes. 

In particular, a diagrammatic theory of Frobenius algebras is the basis of the \emph{ZX calculus} \cite{coecke2008interacting}, whose completeness for the important \emph{stabiliser} fragment of quantum mechanics has recently been proven \cite{backens2013zx}. In \cite{coecke2011ghz, kissinger2012pictures}, a graphical axiomatisation of the relations between the GHZ and W algebras was started, with a similar calculus in mind; but this was not brought to completion, and only results about universality and classification were obtained.

In this paper, we present the \emph{ZW calculus}, a diagrammatic axiomatisation of the relations between the GHZ and W algebras, which incorporates a version of the ZX calculus and shares some of its best properties, such as
\begin{itemize}
	\item featuring \emph{undirected} diagrams, that are ``as symmetrical as they look'', and
	\item having a small number of graphical elements and axioms,
	\item described in terms of important algebraic structures and relations.
\end{itemize}

We prove that the ZW calculus is complete for the category $\mathbf{Ab}_{2,\mathrm{free}}$ of free abelian groups on a power of two generators; more informally, it is complete for ``qubits with integer coefficients'', which embed into generic qubits through the inclusion of integers into complex numbers. We achieve this by providing a normal form for string diagrams, and an explicit normalisation procedure. We also derive completeness results for mild extensions of the calculus.

The hope is that, having a complete axiomatisation and a workable calculus at hand, it will be possible to focus on rewrite strategies that are tailored specifically to identifying the SLOCC class of a state, whose communicational properties should be easily read off a properly normalised diagram. These strategies could be then implemented in automated graph-rewriting software, like Quantomatic \cite{kissinger2014quantomatic}. 

\subsubsection*{Background}
While familiarity with graphical languages for monoidal categories would help, the paper only presupposes a basic knowledge of algebra and category theory, including the definition of (symmetric) monoidal category and monoidal functor. 

String diagrams are featured prominently; there are limitations on how well one can portray spatial reasoning on paper, but we tried to give them an appearance of depth, using different visual devices. In particular, we often draw parts of diagrams in a lighter shade, putting them ``in the background'', either to convey that they are not the current focus of a computation, or that their precise structure is not important. Likewise, if we want to suggest that a certain pattern is repeated $n$ times, we only draw the extremities, and one repetition in a lighter shade, followed by the number $n$.

\section{The language}

We describe the ZW calculus in the framework of PROPs \cite{maclane1965categorical}, thinking of its basic diagrams as the \emph{generators}, and its rules as the \emph{relations} that make up the presentation of a PROP. A PROP is a symmetric strict monoidal category that has $\mathbb{N}$ as its set of objects, and a monoidal product given, on objects, by the sum of natural numbers; morphisms $n \to m$ are meant to represent operations with $n$ inputs and $m$ outputs. Diagrammatically, these are depicted as vertices with $n$ incoming wires and $m$ outgoing wires (time flows from bottom to top):
\begin{equation*}
	\input{img/operations.tikz}
\end{equation*}
Composition is depicted as the vertical ``plugging'' of wires, and monoidal product as the horizontal juxtaposition of diagrams. 

A little uncustomarily, we will depict the symmetric braiding as
\begin{equation*}
	\input{img/1c_crossing.tikz}
\end{equation*}
in order to leave ``intersecting wires'' available for a different morphism, the \emph{crossing}.

Let $\mathbf{SD}$ be the free \emph{self-dual}, compact closed PROP, that is, a PROP with two generators $\cup: 0 \to 2$, $\cap: 2 \to 0$ satisfying
\begin{equation*}
	\input{img/zigzag.tikz}
\end{equation*}
The Kelly-Laplaza coherence theorem for compact closed categories \cite{kelly1980coherence} allows us to be as lax as necessary with the ordering of wires, and the distinction between inputs and outputs, in the diagrammatic calculus. 

If $T$ is a set of operations of a certain arity, let $\mathbf{SD}[T]$ denote the PROP obtained by freely adjoining all morphisms in $T$ to $\mathbf{SD}$. Then, if $R$ is an equivalence relation of morphisms in $\mathbf{SD}[T]$, pairwise of the same type, $\mathbf{SD}[T/R]$ will denote the PROP obtained from $\mathbf{SD}$ by adjoining the generators in $T$ \emph{modulo} the equivalence $R$. We will call pairs in $R$ \emph{rewrite rules}, with graph rewriting in mind (see \cite{kissinger2012pictures} for a review of the subject).

In this formalism, proving the soundness and completeness of the calculus with generators $T$, and rewrite rules $R$, for a monoidal category $\mathbf{C}$, amounts to exhibiting a \emph{monoidal equivalence} between $\mathbf{SD}[T/R]$ and $\mathbf{C}$.

We present two equivalent versions of the ZW calculus. The \emph{condensed} version has the following, infinite set of generators:
\begin{equation*}
	\input{img/3_compact.tikz}
\end{equation*}
The \emph{expanded} version has a finite set of generators $T \subseteq T_c$, containing only binary and ternary black and white vertices.

We interpret these diagrams in $\mathbf{Ab}$, the monoidal category of abelian groups and homomorphisms, with monoidal product given by the tensor product of abelian groups; or rather, in its full subcategory $\mathbf{Ab}_{2,\mathrm{free}}$, generated, under tensoring, by the free abelian group on two generators, $\mathbb{Z}\oplus\mathbb{Z}$.

It is a standard equivalence that abelian groups are the same as $\mathbb{Z}$-modules; the inclusion $\mathbb{Z} \hookrightarrow \mathbb{C}$ induces an inclusion of $\mathbf{Ab}_{2,\mathrm{free}}$ into the category of finite $\mathbb{C}$-modules, that is, complex vector spaces. 

In fact, it is convenient to write the elements of these groups, and the homomorphisms between them, in the \emph{bra-ket} notation commonly used for vectors and linear maps. Hence, letting $\ket{0}$, $\ket{1}$ denote the two generators of $\mathbb{Z}\oplus\mathbb{Z}$, we will write $n\ket{0} + m\ket{1}$, $n, m \in \mathbb{Z}$, for an arbitrary element of $\mathbb{Z}\oplus\mathbb{Z}$; then, we will write $\ket{00} := \ket{0} \otimes \ket{0}$, and $\ketbra{0}{11}$ for the homomorphism $(\mathbb{Z}\oplus\mathbb{Z})^{\otimes 2} \to \mathbb{Z}\oplus\mathbb{Z}$ that sends $\ket{11}$ to $\ket{0}$, and so on.

The semantics of the ZW calculus are defined by a monoidal functor $F: \mathbf{SD}[T_c] \to \mathbf{Ab}_{2,\mathrm{free}}$, fixed by the following interpretation of the generators:
\begin{align*}
	& \input{img/3b1_swap.tikz} \\
	& \begin{tikzpicture}
	\begin{pgfonlayer}{nodelayer}
		\node [style=none] (0) at (-0.5, -0) {};
		\node [style=none] (1) at (0.5, -0) {};
		\node [style=none] (2) at (3, -0.25) {$\mapsto \quad \ket{00} + \ket{11},$};
	\end{pgfonlayer}
	\begin{pgfonlayer}{edgelayer}
		\draw [style=simple, bend right=90, looseness=1.75] (0.center) to (1.center);
	\end{pgfonlayer}
\end{tikzpicture} \\
	& \begin{tikzpicture}
	\begin{pgfonlayer}{nodelayer}
		\node [style=none] (0) at (-0.5, 0) {};
		\node [style=none] (1) at (0.5, 0) {};
		\node [style=none] (2) at (3, 0.25) {$\mapsto \quad \bra{00} + \bra{11},$};
	\end{pgfonlayer}
	\begin{pgfonlayer}{edgelayer}
		\draw [style=simple, bend left=90, looseness=1.75] (0.center) to (1.center);
	\end{pgfonlayer}
\end{tikzpicture} \\
	& \input{img/3b1_black.tikz} \\
	& \input{img/3b1_white.tikz} \\
	& \input{img/3b1_cross.tikz}
\end{align*}
The interpretation of the braiding and of the dualities is self-explanatory. The $n$-ary black vertex corresponds, \emph{modulo} normalisation, to the quantum state $\ket{\mathrm{W}_n}$, the $n$-ary generalisation of the W state. The $n$-ary white vertex, on the other hand, corresponds to the $n$-ary Z spider from the ZX calculus, with a $\pi$ phase \cite{coecke2008interacting}. Save for this phase and normalisation, this is interpreted as the quantum state $\ket{\mathrm{GHZ}_n}$, the $n$-ary generalisation of the GHZ state \cite{greenberger1990bell}.

The \emph{crossing} needs some further explanation. One should keep in mind that this is \emph{not} a braiding in $\mathbf{Ab}_{2,\mathrm{free}}$: such maps have been considered, with applications to supersymmetry \cite{varadarajan2004supersymmetry}, in the theory of \emph{super vector spaces}, or super modules - that is, $\mathbb{Z}_2$-graded modules, with a ``bosonic'' part and a ``fermionic'' part, such that swapping two fermionic states induces a sign change. In our case, $\ket{1}$ would be singled out as the fermionic generator of $\mathbb{Z}\oplus\mathbb{Z}$. However, the categories of super vector spaces and super modules are restricted to so-called \emph{even} maps - that is, maps that preserve the grading: in our case, the ones whose vector expression has an even number of $1$s in each term - for which the crossing is an actual symmetric braiding. 

Moreover, the crossing is not a necessary addition to the graphical language. Crossings are not featured in our normal form for diagrams, and we will provide a systematic procedure for eliminating them. However, their inclusion has some advantages.

First of all, the axioms of the ZW calculus become simpler, and can all be described in terms of well-known algebraic structures and relations, such as Hopf algebras. Only a couple of simple rules needs to be introduced to handle the crossings specifically, plus an elimination rule. 

Furthermore, the binary white vertex has the same interpretation as a self-crossing wire, that is,
	\begin{equation*}
	\input{img/3c_selfcross.tikz}
	\end{equation*}
and can be eliminated in favour of it.

Since the black vertices are interpreted as \emph{odd}, that is, grade-reversing maps - having an odd number of $1$s in each term - this leaves the ternary white vertex as the only vertex, in the expanded calculus, that is interpreted as an ``impure'' map: that is, one which is not even, nor odd. This leaves open the possibility of a \emph{monochromatic} fragment of the ZW calculus, containing only crossings and black vertices, being complete for a subcategory of purely even and purely odd maps.

\section{The rules}

We now present the rule set of the \emph{expanded} ZW calculus. There were some choices to make in its selection and presentation, for which we adopted the following criteria: most subsets of rules should have a short description, linking them to well-known algebraic structures and relations; and the rules of algebraic nature should only contain (weakly) ``planar'' diagrams, that is, diagrams with crossings instead of braidings.

\begin{rul}{0} The black and white vertices are symmetric.
\begin{equation*}
	\input{img/4_rule0.tikz}
\end{equation*}
\begin{equation*}
	\input{img/4_rule0b.tikz}
\end{equation*}
\end{rul}

\begin{remark} This rule allows us to treat the black and white vertices as vertices of an \emph{undirected graph}; in particular, we can turn inputs into outputs, using the dualities, without worrying about which particular wire has been turned around.

For instance, one can speak unambiguously of ``the white vertex with 2 inputs and 1 output'', and depict it as $\;\input{img/white_3.tikz}\;$. We will use Rule 0 implicitly, reshuffling the wires attached to a vertex as needed. 

We will take advantage of this undirectedness throughout, for instance by speaking of \emph{pluggings} of string diagrams, instead of compositions and monoidal products.
\end{remark}

\begin{rul}{1} $\input{img/black_monoid.tikz}$ and $\input{img/white_monoid.tikz}$ are monoids.
\begin{equation*}
	\input{img/4a_rule1.tikz}
\end{equation*}
\begin{equation*}
	\input{img/4a_rule1b.tikz}
\end{equation*}
\end{rul}

\begin{remark}
Rule 0 implies that the two are actually commutative monoids, which automatically yields the right unitality rules.
\end{remark}

\begin{rul}{2} $\;\begin{tikzpicture}
	\begin{pgfonlayer}{nodelayer}
		\node [style=black] (0) at (-4.5, 0.5) {};
		\node [style=none] (1) at (-4.5, 0.25) {};
		\node [style=none] (2) at (-4.5, 0.75) {};
	\end{pgfonlayer}
	\begin{pgfonlayer}{edgelayer}
		\draw [style=simple] (2.center) to (0);
		\draw [style=simple] (0) to (1.center);
	\end{pgfonlayer}
\end{tikzpicture}\;$ and $\;\begin{tikzpicture}
	\begin{pgfonlayer}{nodelayer}
		\node [style=white] (0) at (-4.5, 0.5) {};
		\node [style=none] (1) at (-4.5, 0.25) {};
		\node [style=none] (2) at (-4.5, 0.75) {};
	\end{pgfonlayer}
	\begin{pgfonlayer}{edgelayer}
		\draw [style=simple] (2.center) to (0);
		\draw [style=simple] (0) to (1.center);
	\end{pgfonlayer}
\end{tikzpicture}\;$ are involutions.
\begin{equation*}
	\input{img/4b_rule2.tikz}
\end{equation*}
\end{rul}

\begin{rul}{3} $\;\;$ is an automorphism of $\input{img/white_monoid.tikz}$, and $\;\;$ of $\input{img/black_monoid.tikz}$.
\begin{equation*}
	\input{img/4b_rule3.tikz}
\end{equation*}
\end{rul}

\begin{remark}
We omitted the rules on $\;\;$ and $\;\;$ preserving units, for they are implied by $2a + 3a$, and $2b + 3b$, respectively.
\end{remark}

\begin{rul}{4} $\input{img/white_monoid.tikz}$ and $\input{img/white_comonoid.tikz}$ form a Frobenius algebra.
\begin{equation*}
	\input{img/4c_rule4.tikz}
\end{equation*}
\end{rul}

\begin{rul}{5} $\input{img/black_monoid.tikz}$ and $\input{img/black_comonoid.tikz}$ form a Hopf algebra with antipode $\;\;$.
\begin{equation*}
	\input{img/4d_rule5.tikz}
\end{equation*}
\begin{equation*}
	\input{img/4d_rule5b.tikz}
\end{equation*}
\end{rul}

\begin{remark}
I omitted the adjoint (``vertical flip'')  of rule $5b$, which is implied by symmetry.
\end{remark}

\begin{rul}{6} $\input{img/black_monoid.tikz}$ and $\input{img/white_mixed.tikz}$ form a ``Hopf algebra'' with antipode $\;\begin{tikzpicture}
	\begin{pgfonlayer}{nodelayer}
		\node [style=none] (0) at (-4.5, 0.25) {};
		\node [style=none] (1) at (-4.5, 0.75) {};
	\end{pgfonlayer}
	\begin{pgfonlayer}{edgelayer}
		\draw [style=simple] (1.center) to (0.center);
	\end{pgfonlayer}
\end{tikzpicture}\;$.
\begin{equation*}
	\input{img/4e_rule6.tikz}
\end{equation*}
\end{rul}

\begin{remark}
Since $\input{img/white_mixed.tikz}$ is not a comonoid, this is not, properly speaking, a Hopf algebra, but merely a pair satisfying the defining equations of a Hopf algebra. I skipped the two additional equations that coincide with $5c$ and the adjoint of $5b$.
\end{remark}

\begin{rul}{7} $\;\input{img/black_mult.tikz}\;$ is an even map, while $\;\;$ is odd.
\begin{equation*}
	\input{img/4f_rule7.tikz}
\end{equation*}
\end{rul}

\begin{remark}
The appearance of the white involution - which, as we mentioned, can be replaced with a self-crossing wire - on the other branch of the crossing can be seen as a diagrammatic \emph{definition} of oddness.
\end{remark}

This completes the set of algebraic rules; we single out the last one, which appears to have a purely computational value.
\begin{rul}{X} The elimination rule for crossings.
\begin{equation*}
	\input{img/4g_rulex.tikz}
\end{equation*}
\end{rul}

\begin{dfn} The \emph{expanded ZW calculus} is the set $\mathrm{ZW}$ of all rewrite rules contained in Rules 0-7 and X.
\end{dfn}

It can be verified that all rules are sound for our interpretation, that is, the functor $F: \mathbf{SD}[T] \to \mathbf{Ab}_{2,\mathrm{free}}$ commutes through the quotient $\mathbf{SD}[T] \twoheadrightarrow \mathbf{SD}[T/\mathrm{ZW}]$.

While the expanded ZW calculus is complete, it is hardly the most convenient version with which to work, for it does not exploit all the information that can be encoded in the symmetries of string diagrams. The bridge between expanded and condensed diagrams is given by the \emph{spider rules} - actually, rule schemata, for $n, m \in \mathbb{N}$.
\begin{equation*}
	\input{img/4h_spider_b.tikz}
\end{equation*}
\begin{equation*}
	\input{img/4h_spider_w.tikz}
\end{equation*}
These rules are sound for our interpretation, and, together with rules $2a$ and $2b$, they imply Rule 1, for which they can be substituted:
\begin{equation*}
	\input{img/5a_spider_rule1.tikz}
\end{equation*}
and similarly for white vertices. Moreover, Rule 0, together with the spider rules, implies that the $n$-ary vertices are symmetric for all $n \in \mathbb{N}$.

\subsection*{Derived rules}
We now proceed to prove the validity of several useful derived rules.
\begin{lem} \label{prop:phase}
$\;\;$ commutes with $\;\input{img/white_3.tikz}\;$, that is,
\begin{equation*}
	\input{img/5b_phase.tikz}
\end{equation*}
\end{lem}
\begin{remark}
In the terminology of \cite{coecke2011phase}, $\;\;$ is a \emph{phase} for $\;\input{img/white_3.tikz}\;$. 
\end{remark}
\begin{proof}
First of all,
\begin{equation*}
	\input{img/5b_phase_proof.tikz}
\end{equation*}
then,
\begin{equation*}
	\input{img/5b_phase_proof2.tikz}
\end{equation*}
where the last step utilises the previous derivation.
\end{proof}

This derived rule, together with Rule 1, implies Rule 4, and can be used to replace it.

\begin{prop}[Generalised phase rule]
$\;\;$ commutes with all white vertices: for all $n \in \mathbb{N}$,
\begin{equation*}
	\input{img/5c_phase_comm.tikz}
\end{equation*}
\end{prop}
\begin{proof} For $n = 0, 1, 2$ there is nothing to prove. For $n > 2$, the claim follows from Lemma \ref{prop:phase}, by
\begin{equation*}
	\input{img/5c_phase_comm2.tikz} \qedhere
\end{equation*}
\end{proof}

The previous is the first of a series of inductive generalisations of the basic rules, with proofs all very similar to each other: we start from the case of ternary vertices, and use the spider rule for the inductive step. We will omit their details.

\begin{prop}[Generalised automorphism rules]
The following are derived rewrite rules, for all $n \in \mathbb{N}$:
\begin{equation*}
	\input{img/5d_automorph.tikz}
\end{equation*}
\end{prop}
\begin{proof} The cases $n = 0,1,2$ are given by Rules 2 and 3. For $n > 2$, proceed by induction.
\end{proof}

\begin{prop}[Generalised bialgebra rule, I] \label{prop:bialgebra}
The following is a derived rewrite rule, for all $n, m \in \mathbb{N}$:
\begin{equation*}
	\input{img/5e_bialgebra.tikz}
\end{equation*}
where, in the RHS, there is a single wire connecting each top vertex to each bottom vertex.
\end{prop}
\begin{proof}
Combined with rule $2a$, the case $n = m = 0$ is rule $5c$; the case $n = 1$ or $m = 1$ is trivial; $n = 0$ and $m > 1$, or vice versa, is an easy inductive generalisation of rule $5b$; and $n = m = 2$ is rule $5a$. From here on, proceed by double induction on $n$ and $m$. 
\end{proof}

\begin{prop}[Generalised loop rule, I]
The following is a derived rewrite rule, for all $n, m \in \mathbb{N}$, $n \geq m$:
\begin{equation*}
	\input{img/5i_loop.tikz}
\end{equation*}
\end{prop}
\begin{proof}
For $m = 0$, there is nothing to prove. For $m>0$, observe that
\begin{equation*}
	\input{img/5i_loop_proof.tikz}
\end{equation*}
and use the inductive hypothesis on $m-1$.
\end{proof}

\begin{remark} \label{remark:looprule}
The case $n < m$ can be handled as follows:
\begin{equation*}
	\input{img/5i_loop_proof2.tikz}
\end{equation*}
and then apply the previous Proposition, recalling the all internal wires can be reordered by symmetry.
\end{remark}

\begin{lem} \label{lem:ba_braiding}
The following is a derived rewrite rule:
\begin{equation*}
	\input{img/5f_ba_braiding.tikz}
\end{equation*}
\end{lem}
\begin{proof}
We have
\begin{equation*}
	\input{img/5f_ba_braiding_proof.tikz}
\end{equation*}
then
\begin{equation*}
	\input{img/5f_ba_braiding_proof2.tikz}
\end{equation*}
The claim immediately follows.
\end{proof}

\begin{prop}[Generalised bialgebra rule, II] \label{prop:bialgebra2}
The following is a derived rewrite rule, for all $n \in \mathbb{N}$, $m > 0$:
\begin{equation*}
	\input{img/5g_bialgebra2.tikz}
\end{equation*}
\end{prop}
\begin{proof}
The proof is basically the same as that of Proposition \ref{prop:bialgebra}, where we omit the cases with $m = 0$, use rule $6b$ instead of $5b$, and rule $6a$, with a braiding replacing the crossing as by Lemma \ref{lem:ba_braiding}, instead of rule $5a$.
\end{proof}

\begin{remark}
In fact, Proposition \ref{prop:bialgebra2} also holds for $n = m = 0$:
\begin{equation*}
	\input{img/5g_bialgebra2_add.tikz}
\end{equation*}
\end{remark}

\begin{prop}[Generalised loop rule, II]
The following is a derived rewrite rule, for all $n \geq 2$:
\begin{equation*}
	\input{img/5j_loop2.tikz}
\end{equation*}
\end{prop}
\begin{proof}
Follows from
\begin{equation*}
	\input{img/5j_loop2_proof.tikz} \qedhere
\end{equation*}
\end{proof}

\begin{dfn} The \emph{condensed ZW calculus} is the set $\mathrm{ZW}_c$ consisting of
\begin{enumerate}
	\item the rewrite rules contained in Rules 0, 2, 7 and X, plus $\mathrm{tr}_W$ and $\mathrm{tr}_Z$;
	\item for all $n, m \in \mathbb{N}$, the rules $\mathrm{sp}_W^{n,m}$, $\mathrm{sp}_Z^{n,m}$, $\mathrm{ph}^n$, $\mathrm{am}_W^n$, $\mathrm{am}_Z^n$, $\mathrm{ba}_W^{n,m}$, $\mathrm{lp}_W^{n+m, m}$, $\mathrm{ba}^{n,m+1}$, and $\mathrm{lp}^{n+2}$.
\end{enumerate}
\end{dfn}
We write $F_\mathrm{ZW}: \mathbf{SD}[T_c/\mathrm{ZW}_c] \to \mathbf{Ab}_{2,\mathrm{free}}$ for the functor induced from $F$ by soundness of the rewrite rules.

Even though the condensed ZW calculus has, technically, infinite rewrite rules, which may seem to be a disadvantage, all of its rule schemata are suitable for an implementation using $!$-\emph{graphs} \cite{merry2014reasoning} in Quantomatic. This leads, after all, to a smaller ruleset, as well as shorter derivations.

By the proofs contained in this section, the condensed ZW calculus is equivalent to the expanded ZW calculus, \emph{modulo} the spider rules; that is, $\mathbf{SD}[T/\mathrm{ZW}]$ and $\mathbf{SD}[T_c/\mathrm{ZW}_c]$ are monoidally equivalent PROPs. In the next section, we will prove the completeness of the latter for $\mathbf{Ab}_{2,\mathrm{free}}$, obtaining, at the same time, that of the former.

\section{Completeness}

Any element $\psi$ of $(\mathbb{Z}\oplus\mathbb{Z})^{\otimes n}$ can be uniquely written as a sum
\begin{equation} \label{eq:sum}
	\sum_{i=1}^q \, (-1)^{p_i}\,m_i \, \ket{b_{i,1}\ldots b_{i,n}}\;,
\end{equation}
for some $q \leq 2^n$, $m_i > 0$, and $p_i, b_{i,j} \in \{0,1\}$, $i = 1,\ldots,q$, $j = 1,\ldots,n$, such that no pair of sequences $b_{i,1}\ldots b_{i,n}$ is equal.

We define $N(\psi)$ to be the string diagram
\begin{equation*}
	\input{img/6a_normal.tikz}
\end{equation*}
where
\begin{itemize}
\item the ``sign changer'' vertex $\;\;$ marked with $p_i$ is only there if $p_i = 1$, and
\item the wire marked with $b_{i,j}$, connecting the $i$th white vertex to the $j$th black vertex, is only there if $b_{i,j} = 1$.
\end{itemize}
By symmetry, the ordering of the internal wires is irrelevant, although it is possible to fix an arbitrary criterion, if needed for uniqueness.

\begin{remark}
	The diagram could be additionally simplified by using the spider rule for black vertices, and rule $2b$ to eliminate some binary white vertices. However, we priviliged this form, for it exposes all the individual computational components.
\end{remark}

All homomorphisms $f: (\mathbb{Z}\oplus\mathbb{Z})^{\otimes n_1} \to (\mathbb{Z}\oplus\mathbb{Z})^{\otimes n_2}$ are the partial transpose of some state $\psi_f$ of $(\mathbb{Z}\oplus\mathbb{Z})^{\otimes (n_1 + n_2)}$, so we can define $N(f)$ to be $N(\psi_f)$ with some of the outputs turned into inputs, using the dualities.

We say that a string diagram $\mathcal{G}$ is in \emph{normal form} if there exists a morphism $f$ of $\mathbf{Ab}_{2,\mathrm{free}}$ such that $\mathcal{G} = N(f)$. 

\begin{remark}
Speaking of a normal form is a slight abuse of terminology, since the term is usually associated to terminating, confluent rewrite systems. However, as long as a directed, confluent version of the ZW calculus has not been developed, it should be acceptable.
\end{remark}

\begin{remark}
An embryo of this normal form appeared in \cite{bruni2006basic}, where an axiomatisation of a subcategory of $\mathbf{FRel}$, the category of finite sets and relations - as modules over the semiring of Booleans - was proposed, using the analogues of the GHZ and W monoids. 

This axiomatisation was complete for the theory considered there, but had a large number of convoluted axioms, including a complicated axiom schema with one rule for all $n \in \mathbb{N}$. However, it stirred further work on algebras of connectors for the study of concurrent systems \cite{bruni2013connector}, which ended up crossing paths with research on the ZX calculus \cite{bonchi2014interacting}.
\end{remark}

We claim that $FN(f) = f$; it suffices to check this for states $\psi \in (\mathbb{Z}\oplus\mathbb{Z})^{\otimes n}$. In fact, we will always consider string diagrams corresponding to states; dualities take care of the general case.

\begin{enumerate}
\item First of all, $\input{img/6b_black.tikz}$ gives a state $\ket{\mathrm{W}_q} = \ket{10\ldots0} + \ldots + \ket{0\ldots01}$. The $i$th individual summand, $\ket{0\ldots010\ldots0}$, has a single $1$ in the $i$th position.

\item Then, for $i = 1,\ldots,q$, 
\begin{equation*}
\input{img/6b_loop.tikz}
\end{equation*}
The $i$th summand is transformed into $(-1)^{p_i}\, m_i \,\ket{0\ldots010\ldots0}$.

\item Finally, the
\begin{equation*}
\input{img/6b_white.tikz}
\end{equation*}
copy both $0$s and $1$s. The $0$s of $\ket{0\ldots010\ldots0}$ get ``absorbed'' by the black vertices:
\begin{equation*}
\input{img/6b_spider.tikz}
\end{equation*}
leaving only a diagram of the form
\begin{equation*}
\input{img/6b_reassign.tikz}
\end{equation*}
Overall, the $i$th summand is transformed into $(-1)^{p_i}\,m_i \, \ket{b_{i,1}\ldots b_{i,n}}$, and $\ket{\mathrm{W}_q}$ into $\psi$.
\end{enumerate}

This proves that our interpretation $F$ is a \emph{full} functor over $\mathbf{Ab}_{2,\mathrm{free}}$. Completeness of the ZW calculus for $\mathbf{Ab}_{2,\mathrm{free}}$ will ensue from the following two facts:
\begin{enumerate}[label=(\alph*)]
\item $N$ is a monoidal functor $\mathbf{Ab}_{2,\mathrm{free}} \to \mathbf{SD}[T_c / \mathrm{ZW}_c]$;
\item $N$ is a left inverse for $F_\mathrm{ZW}$.
\end{enumerate}

\begin{lem}[Delooping] A string diagram in normal form can be rewritten in a loop-free form, that is, \label{lem:delooping}
\begin{equation} \label{eq:delooped}
\input{img/6c_delooped.tikz}
\end{equation}
\end{lem}
\begin{remark} This operation corresponds, basically, to writing $m_i$ as the sum $\overbrace{1 + \ldots + 1}^{m_i}$. \end{remark}
\begin{proof}
Follows from 
\begin{equation*}
\input{img/6c_delooped_proof.tikz}
\end{equation*}
performed on all loops, with a final application of the spider rule to merge all the black vertices on the bottom.
\end{proof}

\begin{remark} \label{remark:prenormal} Conversely, we can rewrite in normal form any diagram that is in a form like (\ref{eq:delooped}), and may additionally
\begin{itemize}
	\item have more than one wire connecting a pair of a black and a white vertex: these can be eliminated with the rules $\mathrm{lp}^n$;
	\item have two white vertices connected to the same outputs, one with a sign changer, the other without it.
\end{itemize}
The latter, intuitively, correspond to a term $1 - 1$ in the summation, and should cancel out. By retracing the proof of Lemma \ref{lem:delooping}, we see that these pairs end up being rewritten as a loop
\begin{equation*}
	\input{img/6c_delooped_add.tikz}
\end{equation*}
to which the rule $\mathrm{lp}_W^{n,m}$ can be applied, either directly, or through the steps of Remark \ref{remark:looprule}. 
\end{remark}

We say that such diagrams are in \emph{pre-normal form}. In most of the following proofs, we will deloop diagrams in normal form, and perform certain operations that will, in general, only yield a diagram in pre-normal form; that this is sufficient follows from the considerations of Remark \ref{remark:prenormal}.

\begin{lem}[Negation]\label{lem:negation} The plugging of $\;\;$ into one end of a diagram in (pre-)normal form can be rewritten in normal from, and has the effect of ``negating'' its connections to the white vertices; that is,
\begin{equation*}
	\input{img/6d_negation.tikz}
\end{equation*}
\end{lem}
\begin{proof}
Suppose first that $n>0$. By using the spider rules and the phase rules, we can ``detach'' the part of the diagram containing the connections of the vertex that is involved:
\begin{equation*}
	\input{img/6d_negation_proof.tikz}
\end{equation*}
applying the bialgebra rule $\mathrm{ba}^{2,n}$, this is rewritten as
\begin{equation*}
	\input{img/6d_negation_proof2.tikz}
\end{equation*}
where we used the automorphism rule to push $\;\;$ through, and moved vertices around a bit to make the next step clearer. In the case $n = 0$, we can directly skip to this point:
\begin{equation*}
	\input{img/6d_negation_proof2b.tikz}
\end{equation*}

Using the bialgebra rule $\mathrm{ba}^{m,2}$, and rule $2b$ to eliminate some binary white vertices, we rewrite this as
\begin{equation*}
	\input{img/6d_negation_proof3.tikz}
\end{equation*}
which completes the proof.
\end{proof}

In particular, through negation in the sense of this lemma, a disconnected black vertex can be connected to \emph{all} the topmost white vertices.

\begin{lem}[Trace] Let $\mathcal{G}$ be a string diagram in (pre-)normal form. The plugging of two open ends of $\mathcal{G}$ into each other, \label{lem:selfplug}
\begin{equation*}
	\input{img/6e_selfplug.tikz}
\end{equation*}
can be rewritten in normal form.
\end{lem}
\begin{proof}
We apply negation repeatedly; since this only affects the connections of the two ends involved, we can avoid drawing the rest of the diagram. We distinguish four groups of white vertices, based on their being connected to both ends, only one end, or no ends:
\begin{equation*}
	\input{img/6e_selfplug_proof.tikz}
\end{equation*}
Then, using the spider rule to merge the black vertices, and the $\mathrm{lp}^2$ rule to eliminate group i, we rewrite this as
\begin{equation*}
	\input{img/6e_selfplug_proof2.tikz}
\end{equation*}
where we used negation again. Finally, focusing on group iv, 
\begin{equation*}
	\input{img/6e_selfplug_proof3.tikz}
\end{equation*}
\emph{Modulo} the automorphism rule, wires on the bottom all lead to black vertices, so we can apply the spider rule, obtaining a diagram in (pre-)normal form.
\end{proof}

The nullary black vertex is interpreted as $0$; the next lemma proves that it acts this way.

\begin{lem}[Absorption] For all diagrams in (pre-)normal form, the following is a valid rewrite rule:
\begin{equation*}
	\input{img/6f_absorption.tikz}
\end{equation*}
\end{lem}
\begin{proof}
Using negation, we obtain
\begin{equation*}
	\input{img/6f_absorption_proof.tikz}
\end{equation*}
where the new vertex is connected to all the topmost white vertices. From here, we can proceed as in the last part of Lemma \ref{lem:selfplug}.
\end{proof}

With the negation, trace and absorption lemmata on hand, we are able to give the central proof of our completeness theorem.

\begin{thm} Let $\mathcal{G}$, $\mathcal{H}$ be two string diagrams in (pre-)normal form. Then the plugging of $\mathcal{G}$ and $\mathcal{H}$ along any number of wires can be rewritten in normal form. \label{thm:plugging}
\end{thm}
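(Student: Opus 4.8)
The plan is to reduce an arbitrary plugging to repeated applications of the Trace lemma, isolating a single genuinely new ingredient: the behaviour of the monoidal product on normal forms. Plugging $\mathcal{G}$ and $\mathcal{H}$ along $k$ wires is the same as first forming the juxtaposition $\mathcal{G}\otimes\mathcal{H}$ and then self-plugging the $k$ resulting pairs of open ends one at a time. Once $\mathcal{G}\otimes\mathcal{H}$ has been rewritten into (pre-)normal form, each of these self-pluggings is exactly the situation of Lemma \ref{lem:selfplug}, so they can be discharged in turn, each step consuming a (pre-)normal diagram and returning one. Everything therefore reduces to the case $k=0$: showing that the side-by-side juxtaposition of two (pre-)normal diagrams can itself be rewritten into (pre-)normal form.

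First I would analyse this base case at the level of the bottom selector vertices. A diagram in normal form is controlled by a single $q$-ary black vertex that ``selects'' one of its $q$ summand white vertices; so $\mathcal{G}\otimes\mathcal{H}$ carries two disjoint selectors, of arities $q$ and $q'$, whereas the target normal form must carry a single selector of arity $qq'$, with one white vertex for each pair of summands. To build this product I would first use the white spider rule to split each summand white vertex of $\mathcal{G}$ into $q'$ copies and each of $\mathcal{H}$ into $q$ copies, then fuse the $(i,j)$-th copies pairwise, again by the white spider rule, into a single white vertex bridging the two halves and connected through the selectors to both the $i$-th leg of one bottom vertex and the $j$-th leg of the other. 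Pushing these bridges downward with the generalised bialgebra rules then amalgamates the two bottom black vertices into a single one of arity $qq'$, yielding the required product structure. The phases $p_i$ and $p'_j$ on each pair are to be combined using the Negation lemma (Lemma \ref{lem:negation}), the magnitudes $m_i m'_j$ are produced from the two loops by the $\mathrm{lp}$ rules, and any pair of summands that ought to cancel collapses to a nullary black vertex and is removed by the Absorption lemma.

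The hard part will be precisely this amalgamation of the two selectors. Unlike the Trace, Negation and Absorption lemmas, each of which only rearranges the connections of vertices already present, this step must \emph{change the connectivity} of the diagram, turning two disjoint selector-trees into a single connected one and, with it, $q+q'$ white vertices into $qq'$. Getting the bialgebra induction to reproduce exactly the product of summands, while keeping the correspondence between surviving pairs and non-cancelling terms under control, is the delicate bookkeeping, and it is exactly where the Negation and Absorption lemmas earn their keep. Once the base case is established, the reduction to iterated Trace makes the remainder of the argument routine, and the plugging of $\mathcal{G}$ and $\mathcal{H}$ along any number of wires is rewritten into normal form.
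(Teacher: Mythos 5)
Your outer reduction is sound and close to the paper's: factoring a $k$-wire plugging into a juxtaposition followed by $k$ applications of the Trace lemma (Lemma \ref{lem:selfplug}) is a legitimate use of compact closure, and the paper factors similarly (single-end plugging plus traces). But your base case contains a genuine gap, and it is exactly the point where all the difficulty of the theorem lives. After juxtaposing $\mathcal{G}$ and $\mathcal{H}$, the two diagrams are \emph{disjoint}: there is no wire connecting any vertex of one to any vertex of the other. The white spider rule $\mathrm{sp}_Z^{n,m}$ only merges two white vertices that are already joined by a wire, so your step ``fuse the $(i,j)$-th copies pairwise by the white spider rule'' cannot be performed: no rule you invoke ever creates a wire between the two components, and none of the splitting, spider, phase or automorphism rules changes the number of connected components. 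This is not a bookkeeping detail to be absorbed into the bialgebra induction; it is the reason the juxtaposition case needs a special idea. The paper supplies it by running your reduction in the opposite direction: it introduces a black--black edge out of nothing using $\mathrm{ba}_W^{0,0}$ (the disconnected dumbbell equals the empty diagram), then applies the Negation lemma (Lemma \ref{lem:negation}) twice to entangle the two endpoints of that edge with the topmost white vertices of $\mathcal{G}$ and of $\mathcal{H}$ respectively, thereby rewriting the juxtaposition as a plugging along a \emph{single} wire. The single-wire plugging is then the core computation, carried out with $\mathrm{ba}_W^{i,j}$, $\mathrm{ph}^k$, $\mathrm{ba}^{j,k}$, rule $\mathrm{X}$ and the spider rules; it is there, with the connecting wire available, that the $qq'$ product-of-summands structure you describe actually emerges.

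Two smaller corrections. First, even granting a bridge, your ``split each white vertex into $q'$ copies'' misuses the spider rule: spider-splitting yields one connected white spider, not $q'$ independent copies carrying the summand indicator, so for fixed $i$ your fused vertices $(i,1),\dots,(i,q')$ would remain mutually connected, which is not the connectivity of $N(\psi\otimes\phi)$; genuine duplication in this calculus comes from pushing a black vertex through white ones via the bialgebra rules, not from spiders. Second, cancelling pairs of summands are disposed of by the loop rules $\mathrm{lp}_W^{n,m}$ (as in the remark on pre-normal forms), not by the Absorption lemma, which handles the nullary black vertex $0$. If you repair the base case by inserting the $\mathrm{ba}_W^{0,0}$-plus-negation bridge, your argument collapses into the paper's proof rather than providing an alternative to it.
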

\begin{proof}
As usual, deloop $\mathcal{G}$ and $\mathcal{H}$ if they are not already loop-free. Suppose that \emph{no} end of $\mathcal{G}$ is plugged into one of $\mathcal{H}$, that is, the two diagrams are simply juxtaposed. Then, we can rewrite the result as
\begin{equation*}
	\input{img/6g_plugging_0.tikz}
\end{equation*}
where we introduced a $\;\begin{tikzpicture}
	\begin{pgfonlayer}{nodelayer}
		\node [style=black] (0) at (0, 0.25) {};
		\node [style=black] (1) at (0, 0.5) {};
	\end{pgfonlayer}
	\begin{pgfonlayer}{edgelayer}
		\draw [style=simple] (1) to (0);
	\end{pgfonlayer}
\end{tikzpicture}\;$ in the picture, using $\mathrm{ba}_W^{0,0}$, and applied negation twice. The diagram so obtained is a plugging along a single end.

On the other hand, if more ends of $\mathcal{G}$ are plugged into ends of $\mathcal{H}$, we can factor the plugging as a single-end plugging, followed by a sequence of traces, as in Lemma \ref{lem:selfplug}. Therefore, it suffices to consider the case where one end of $\mathcal{G}$ is plugged into an end of $\mathcal{H}$.

If one of the open ends that are being plugged is disconnected from white vertices, or both of them are, we can apply $2a$, and then negation to both of them. The only cases when this still leaves one end disconnected are
\begin{enumerate}[label=(\alph*)]
	\item when one of the diagrams is of the form
	\begin{equation*}
	\input{img/6g_zero_diagram.tikz}
	\end{equation*}
	and we can use the nullary black vertex $\;\begin{tikzpicture}
	\begin{pgfonlayer}{nodelayer}
		\node [style=black] (0) at (0, 0.25) {};
	\end{pgfonlayer}
\end{tikzpicture}\;$ to absorb the other diagram;
	
	\item when one end was connected to all the topmost white vertices, and the other to none. In this case, by only negating the first, we can obtain $\;\;$, and apply the absorption lemma again.
\end{enumerate}
Therefore, we can assume that both ends are connected to at least one white vertex of their respective diagrams.

Focusing on one side of the plugging - say, $\mathcal{G}$ - and the connected white vertices, we have a subdiagram of the form
\begin{equation*}
	\input{img/6g_plugging_proof.tikz}
\end{equation*}
\emph{Modulo} the automorphism rule, the wires on the bottom all lead to black vertices, which we can merge with the spider rules. 

In particular, one wire for each of the initial white vertices leads to the ``bottom'' black vertex of $\mathcal{G}$; hence, there is one black vertex to which all of the newly created white vertices are connected. This allows us to use rule X on any pair of white vertices, and turn all the crossings of the diagram into braidings.

Proceeding symmetrically on the side of $\mathcal{H}$, we can push all the white vertices in the middle, at which point, applying the spider rules and $2b$ as much as needed, we obtain a string diagram in pre-normal form. This completes the proof.
\end{proof}

\begin{cor}
$N$ is a monoidal functor $\mathbf{Ab}_{2,\mathrm{free}} \to \mathbf{SD}[T_c / \mathrm{ZW}_c]$.
\end{cor}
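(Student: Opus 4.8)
The goal is to show that the assignment $N$, which sends each morphism $f$ of $\mathbf{Ab}_{2,\mathrm{free}}$ to the normal-form diagram $N(f)$, is a monoidal functor into $\mathbf{SD}[T_c/\mathrm{ZW}_c]$. The plan is to verify the three defining properties of a monoidal functor in turn: preservation of identities, preservation of composition, and preservation of the monoidal product (together with compatibility with the coherence isomorphisms, which is automatic here since both categories are strict PROPs). The central difficulty, and the only one requiring real work, is functoriality on composites; this is precisely where Theorem~\ref{thm:plugging} does the heavy lifting.

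First I would record that $N$ is well-defined on objects by sending $n \in \mathbb{N}$ to $n$, matching the strict monoidal structure on objects. For the unit and the identities, one checks directly that $N(\mathrm{id}_n)$ is the normal form of the identity homomorphism — a diagram whose interpretation under $F_{\mathrm{ZW}}$ is the identity — so that $N(\mathrm{id}_n)$ equals the identity wires modulo $\mathrm{ZW}_c$; this is a routine unwinding of the definition of $N$ on the state $\psi_{\mathrm{id}}$ corresponding to the identity under map-state duality, followed by an application of the spider rules and rule $2b$.

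The heart of the argument is preservation of composition and of tensoring. Given composable $f$ and $g$, the diagram $N(g)\circ N(f)$ is obtained by plugging the normal-form diagram $N(f)$ into the normal-form diagram $N(g)$ along the appropriate wires. By Theorem~\ref{thm:plugging}, any such plugging of two (pre-)normal-form diagrams can be rewritten, using the rules of $\mathrm{ZW}_c$, back into normal form; since $N$ is a left inverse for the full functor $F_{\mathrm{ZW}}$ — a fact established immediately after this corollary, but which I would invoke only to identify \emph{which} normal form results — the rewritten diagram must be $N(g \circ f)$, because normal forms are determined by their semantics and $F_{\mathrm{ZW}}(N(g)\circ N(f)) = g \circ f$ by soundness and functoriality of $F_{\mathrm{ZW}}$. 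Hence $N(g)\circ N(f) = N(g\circ f)$ in $\mathbf{SD}[T_c/\mathrm{ZW}_c]$. The argument for the monoidal product is the same but simpler: the juxtaposition $N(f)\otimes N(g)$ is the ``no end plugged'' special case of Theorem~\ref{thm:plugging}, which rewrites to a diagram in normal form with the correct semantics $f\otimes g$, so $N(f)\otimes N(g) = N(f\otimes g)$.

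The main obstacle is conceptual rather than computational: it lies in making precise the claim that ``a normal form is determined by its semantics,'' since we must know $N(g)\circ N(f)$ and $N(g\circ f)$ are \emph{equal as morphisms of the quotient PROP}, not merely interpreted identically in $\mathbf{Ab}_{2,\mathrm{free}}$. This is exactly why the rewriting in Theorem~\ref{thm:plugging} is essential: it produces an actual chain of $\mathrm{ZW}_c$-equalities connecting the plugged diagram to a normal form, and by fullness of $F$ together with uniqueness of the expansion~\eqref{eq:sum}, there is a unique normal-form diagram with any prescribed semantics, so that target normal form is forced to be $N(g\circ f)$. I would therefore present the composition step as the substantive case and dispatch identities, units, and tensoring as corollaries of the same mechanism, noting finally that strictness of both PROPs makes the coherence conditions for a monoidal functor hold trivially.
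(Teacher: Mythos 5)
Your proposal is correct and takes essentially the same route as the paper's proof: composition and monoidal product in $\mathbf{SD}[T_c/\mathrm{ZW}_c]$ are pluggings, Theorem~\ref{thm:plugging} rewrites any plugging of normal forms into a normal form, and soundness together with uniqueness of the normal form forces that result to be $N(g\circ f)$ (resp.\ $N(f\otimes g)$) --- the paper treats identities and coherence as routine exactly as you suggest. One small correction: the fact your argument actually uses is $F_\mathrm{ZW}\circ N = \mathrm{id}$ (the claim $FN(f)=f$, proved \emph{before} the corollary), not the left-inverse property $N\circ F_\mathrm{ZW} = \mathrm{id}$ established afterwards, so your parenthetical mislabels which inverse is invoked even though the substance of the step is sound.
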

\begin{proof}
Both composition and monoidal product in $\mathbf{SD}[T_c / \mathrm{ZW}_c]$ correspond to certain pluggings (possibly along zero wires) of string diagrams. Moreover, by uniqueness of the normal form and soundness of the rewrite rules, if $N(g)\circ N(f)$ is rewritten into $N(h)$ for some homomorphism $h$, then necessarily $h = g\circ f$; similarly for $N(g) \otimes N(f)$.

Since, by Theorem \ref{thm:plugging}, such a rewrite is always possible, it follows that $N$ is a monoidal functor.
\end{proof}

\begin{thm} Every string diagram can be rewritten in normal form. \label{thm:completeness}
\end{thm}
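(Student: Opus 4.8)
The plan is to prove the statement by structural induction on the string diagram, with Theorem~\ref{thm:plugging} doing all the real work in the inductive step. Recall that every morphism of $\mathbf{SD}[T_c/\mathrm{ZW}_c]$ is built from the generators — the $n$-ary black and white vertices of $T_c$, together with the compact closed structure (identities, braidings, cups $\cup$, caps $\cap$) and the crossing — by iterated composition and monoidal product. Both of these operations are instances of \emph{plugging}: monoidal product is a plugging along zero wires, and composition $g\circ f$ is realised by turning $f$ and $g$ into states via the dualities and then plugging their matching ends together. So I would reduce the theorem to two claims: first, that each individual generator can be rewritten into (pre-)normal form; and second, that the plugging of two (pre-)normal forms can be rewritten into normal form. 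The second claim is precisely Theorem~\ref{thm:plugging}, so only the first — the base case — needs separate attention.

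For the base case I would argue generator by generator. The structural generators are easy: the identity wire, the cup and the cap are $N$ of the identity homomorphism and of the states $\ket{00}+\ket{11}$ and $\bra{00}+\bra{11}$ (up to bending wires through the dualities), and braidings merely permute connectivity and are absorbed by Kelly--Laplaza coherence together with the symmetry of all vertices, under which the normal form is insensitive to wire orderings. The $n$-ary black vertex is the $\ket{\mathrm{W}_n}$ state and the $n$-ary white vertex is $\ket{0\ldots0}-\ket{1\ldots1}$; for these I would read off the explicit shape of $N$ exactly as in the fullness computation preceding Lemma~\ref{lem:delooping}, and realise the passage from the raw vertex to that shape using the spider rules $\mathrm{sp}_W$, $\mathrm{sp}_Z$ and rule~$2b$. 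The crossing is the one delicate generator: being neither even nor odd, it is not directly subsumed by the vertex cases, so I would first eliminate it in favour of a braiding using rule~$\mathrm{X}$, after which it reduces to the already-handled structural case.

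With the base case in hand the induction closes cleanly. Inducting on the number of generators, the empty diagram is trivially in normal form; for the step, peel off a single generator, so that the diagram is the plugging of the remaining subdiagram (in normal form by the induction hypothesis) with that one generator (in (pre-)normal form by the base case) along the generator's legs. Theorem~\ref{thm:plugging} rewrites the result back into normal form, completing the induction. As a corollary, since $FN=\mathrm{id}$ and the normal form is unique, any diagram $\mathcal{G}$ rewriting to $N(f)$ forces $f=F_{\mathrm{ZW}}(\mathcal{G})$, whence $N F_{\mathrm{ZW}}=\mathrm{id}$; this is fact~(b), and together with the preceding corollary it yields completeness.

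The step I expect to be the main obstacle is not the induction, which is a thin wrapper, but making the base case genuinely rigorous: one must check that the raw $n$-ary vertices and the crossing actually \emph{reduce} to their normal forms, rather than merely sharing their semantics with them. The crossing is the sharpest point here, precisely because it is the only ``impure'' generator and must be routed through rule~$\mathrm{X}$ before the normalisation machinery can bite. All the heavy combinatorics, however, has already been discharged inside Theorem~\ref{thm:plugging} and the negation, trace and absorption lemmata on which it depends, so the present theorem should read as a short assembly of those results.
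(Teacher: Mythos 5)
Your overall skeleton matches the paper's: reduce the theorem to normalising the individual generators and then invoke Theorem~\ref{thm:plugging} for pluggings (the paper phrases this as ``such a diagram is equal to a plugging of generators\ldots it suffices to prove that these can be rewritten in normal form'', after first using the spider rules to cut down to binary and ternary vertices, whereas you normalise the $n$-ary condensed generators directly --- an inessential difference). The genuine gap is exactly at the point you flag as delicate, the crossing, and your proposed fix does not work. Rule~$\mathrm{X}$ is \emph{not} a context-free replacement of a crossing by a braiding: its left-hand side is a composite diagram in which the two crossing wires each pass through white vertices that are joined, via binary white vertices, to a common black vertex, and only in that context may the crossing be traded for a braiding. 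No sound rule could do what you ask of it, since the crossing and the braiding denote different homomorphisms ($-\ketbra{11}{11}$ versus $+\ketbra{11}{11}$ on the $\ket{11}$ component), so a bare crossing can never be rewritten into a bare braiding. Constructing the context in which $\mathrm{X}$ can fire is precisely where the paper's proof of Theorem~\ref{thm:completeness} spends its effort: it first derives an auxiliary identity using $2a$, $7b$, $\mathrm{am}_Z^2 \,;\, \mathrm{ph}^3$ and $\mathrm{X}$, then uses rule $7a$ to express the crossing as a plugging of normal-form pieces in which pairs of white vertices hang off a common black vertex, and only then applies $\mathrm{X}$ to convert the crossings into braidings. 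This is the same mechanism used inside Theorem~\ref{thm:plugging} (``there is one black vertex to which all of the newly created white vertices are connected; this allows us to use rule X''), and it cannot be short-circuited.

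A secondary gap sits in your treatment of the braiding. Kelly--Laplaza coherence lets you permute wires \emph{attached to an already normalised diagram}, so braidings plugged into normal forms are indeed harmless; but it cannot dispose of the base case where the diagram \emph{is} a braiding. In the free self-dual compact closed PROP, composites of cups, caps and identities are planar and realise only non-crossing matchings, so $\sigma$ is not an instance of coherence over the duality structure, and the completeness claim $NF_{\mathrm{ZW}} = \mathrm{id}$ forces you to rewrite the bare braiding into the nontrivial vertex-built diagram $N(F(\sigma))$ --- the paper does this by an explicit derivation (``the braidings are handled similarly'' to the crossings, again routed through rules $7a$, $7b$ and $\mathrm{X}$). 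So both of the only two ``impure-looking'' generators require the algebraic rules in an essential way; the rest of your proposal --- the induction wrapper, the spider-rule normalisation of the black and white vertices, and the cup/cap cases --- is sound and coincides with the paper's argument.
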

\begin{proof}
First of all, using the spider rules, we can rewrite every string diagram into a diagram with only ternary and binary black and white vertices. Such a diagram is equal to a plugging of generators in $T$, dualities, and braidings; by Theorem \ref{thm:plugging}, it suffices to prove that these can be rewritten in normal form.

The black vertices are trivial:
\begin{equation*}
	\input{img/6h_rewrite_black.tikz}
\end{equation*}

For the ternary white vertex:
\begin{equation*}
	\input{img/6h_rewrite_white.tikz}
\end{equation*}
and similarly for the binary one. For the dualities:
\begin{equation*}
	\input{img/6h_rewrite_wire.tikz}
\end{equation*}
where the first rewrite was derived in the proof of Lemma \ref{prop:phase}.

Finally, for the crossings, observe first that
\begin{equation*}
	\input{img/6h_rewrite_crossing1.tikz}
\end{equation*}
from which, retracing some steps, we obtain
\begin{equation*}
	\input{img/6h_rewrite_crossing2.tikz}
\end{equation*}
Then, having rewritten
\begin{equation*}
	\input{img/6h_rewrite_crossing3.tikz}
\end{equation*}
we have
\begin{equation*}
	\input{img/6h_rewrite_crossing4.tikz}
\end{equation*}
to which rule $\mathrm{X}$ can be applied, yielding a diagram in normal form. The braidings are handled similarly.
\end{proof}

\begin{remark} If the initial diagram has no crossings, rule $7b$ is not needed for its normalisation.
\end{remark}

\begin{cor}[Completeness of the ZW calculus]
$F_\mathrm{ZW}: \mathbf{SD}[T_c/\mathrm{ZW}_c] \to \mathbf{Ab}_{2,\mathrm{free}}$ and $N: \mathbf{Ab}_{2,\mathrm{free}} \to \mathbf{SD}[T_c/\mathrm{ZW}_c]$ form a monoidal equivalence.
\end{cor}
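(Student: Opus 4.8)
The plan is to exhibit $F_{\mathrm{ZW}}$ and $N$ as mutually inverse strict monoidal functors, so that the claimed monoidal equivalence holds with identity coherence isomorphisms. Essentially all of the genuine labour has already been discharged in Theorem~\ref{thm:plugging} and Theorem~\ref{thm:completeness}; what remains at the level of this corollary is to assemble the two one-sided inverse statements and read off the conclusion.

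First I would record that both functors are monoidal: $F_{\mathrm{ZW}}$ is, by construction, the functor induced from $F$ through the quotient $\mathbf{SD}[T_c] \twoheadrightarrow \mathbf{SD}[T_c/\mathrm{ZW}_c]$ by soundness of the rewrite rules, and $N$ is monoidal by the Corollary established immediately after Theorem~\ref{thm:plugging}. Next I would observe that one of the two composites is already the identity on the nose: the explicit computation verifying $FN(f) = f$ (checked for states and extended to all morphisms by the dualities) says exactly that $F_{\mathrm{ZW}} \circ N = \mathrm{id}_{\mathbf{Ab}_{2,\mathrm{free}}}$.

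The remaining task is to prove $N \circ F_{\mathrm{ZW}} = \mathrm{id}_{\mathbf{SD}[T_c/\mathrm{ZW}_c]}$. Given an arbitrary string diagram $\mathcal{G}$, Theorem~\ref{thm:completeness} produces a morphism $f$ of $\mathbf{Ab}_{2,\mathrm{free}}$ with $\mathcal{G} = N(f)$, already as an equality in the quotient PROP. Applying $F_{\mathrm{ZW}}$ and invoking $F_{\mathrm{ZW}}N = \mathrm{id}$ gives $F_{\mathrm{ZW}}(\mathcal{G}) = F_{\mathrm{ZW}}(N(f)) = f$, so $f$ is forced to equal $F_{\mathrm{ZW}}(\mathcal{G})$; this is precisely the uniqueness of the normal form. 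Substituting back yields $N(F_{\mathrm{ZW}}(\mathcal{G})) = N(f) = \mathcal{G}$ in $\mathbf{SD}[T_c/\mathrm{ZW}_c]$, as required. Since both composites are literally identity functors between PROPs sharing the object monoid $\mathbb{N}$, the pair is in fact an isomorphism of monoidal categories, and a fortiori a monoidal equivalence.

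I expect no substantive obstacle at this stage: the one subtlety worth flagging is the step identifying $f$ with $F_{\mathrm{ZW}}(\mathcal{G})$, which is the sole place where soundness enters and which upgrades the mere \emph{existence} statement of Theorem~\ref{thm:completeness} to the \emph{uniqueness} of the normal form; the rest is formal bookkeeping. The real difficulty lives upstream, in the delooping, negation, trace, and absorption lemmata (Lemmata~\ref{lem:delooping}, \ref{lem:negation}, \ref{lem:selfplug}) feeding Theorem~\ref{thm:plugging}, which this corollary merely packages into the statement of completeness.
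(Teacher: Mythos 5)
Your proposal is correct and takes essentially the same route as the paper: both treat $F_\mathrm{ZW} N = \mathrm{id}$ as already established by the explicit $FN(f)=f$ computation, and both deduce $N F_\mathrm{ZW} = \mathrm{id}$ by using Theorem~\ref{thm:completeness} to write an arbitrary $\mathcal{G}$ as $N(f)$ and then invoking soundness to force $f = F_\mathrm{ZW}(\mathcal{G})$ (your application of $F_\mathrm{ZW}$ to the equation $\mathcal{G}=N(f)$ is exactly the paper's appeal to ``uniqueness of the normal form''). Your closing observation that the composites are identities on the nose, so the pair is in fact an isomorphism of PROPs rather than merely an equivalence, is a harmless sharpening the paper leaves implicit.
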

\begin{proof}
We already know that $N$ is a right inverse for $F$. By uniqueness of the normal form, and soundness of the rewrite rules, if a diagram $\mathcal{G}$ is rewritten into $N(f)$ for some homomorphism $f$, then necessarily $f = F(\mathcal{G})$. 

Since, by Theorem \ref{thm:completeness}, such a rewrite is always possible, $\mathcal{G} = NF_\mathrm{ZW}(\mathcal{G})$ for all morphisms $\mathcal{G}$ of $\mathbf{SD}[T_c / \mathrm{ZW}_c]$.
\end{proof}

Although we only explicitly stated the completeness of the condensed ZW calculus, that of the expanded version immediately ensues.

One consequence that we can draw at once is that, under a suitable reinterpretation of the latter's diagrams, the ZW calculus contains the ZX calculus with $\pi$ phases, and is, to all effects, a refinement of it. This follows from the fact that a triangle of ternary W vertices corresponds to the ternary X vertex from the ZX calculus, with a $\pi$ phase:
\begin{equation*}
	\input{img/3d_triangle.tikz}
\end{equation*}

In particular, it is provable in the ZW calculus that 
\begin{equation*}
	\input{img/X_monoid.tikz}\quad \text{and} \quad \input{img/white_monoid.tikz}
\end{equation*}
form a \emph{strongly complementary pair} in the sense of \cite{coecke2012strong}. 

Moreover, the ZW calculus completes the axiomatisation of the GHZ/W calculus with additive inverses, as started in \cite{coecke2011ghz}, and can be used to encode rational arithmetic as suggested there.

With little effort, we can obtain completeness results for mild extensions of the ZW calculus. For all $n \in \mathbb{N}$, let $\mathbf{Ab}_{2,n}$ be the subcategory of $\mathbf{Ab}$ generated, under tensoring, by $\mathbb{Z}_n\oplus\mathbb{Z}_n$, where $\mathbb{Z}_n$ is the cyclic group of order $n$; and let $\mathrm{ZW}_n$ be the (expanded or condensed) ZW calculus augmented with the rule
\begin{equation*}
	\input{img/6i_loop_n.tikz}
\end{equation*}
There is a quotient functor $\mathbf{Ab}_{2,\mathrm{free}} \to \mathbf{Ab}_{2,n}$ induced by the quotient $\mathbb{Z} \twoheadrightarrow \mathbb{Z}_n$, and we can see that the rewrite rule $\mathrm{or}^n$ is precisely the implementation, on diagrams in normal form, of the action of this functor. Thus, we can state the following.

\begin{cor}
For all $n \in \mathbb{N}$, $\mathbf{SD}[T_c / \mathrm{ZW}_n]$ is monoidally equivalent to $\mathbf{Ab}_{2,n}$.
\end{cor}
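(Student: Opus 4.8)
The plan is to leverage the established equivalence $F_{\mathrm{ZW}} \colon \mathbf{SD}[T_c/\mathrm{ZW}_c] \xrightarrow{\sim} \mathbf{Ab}_{2,\mathrm{free}}$ (with inverse $N$) and to transport along it the effect of adjoining the single rule $\mathrm{or}^n$. Since $\mathbf{SD}[T_c/\mathrm{ZW}_n]$ is by definition $\mathbf{SD}[T_c]$ modulo $\mathrm{ZW}_c \cup \{\mathrm{or}^n\}$, it is the quotient PROP of $\mathbf{SD}[T_c/\mathrm{ZW}_c]$ by the congruence generated by $\mathrm{or}^n$; under $F_{\mathrm{ZW}}$ this corresponds to the quotient of $\mathbf{Ab}_{2,\mathrm{free}}$ by the congruence $R'$ generated by the relation $F(\mathrm{or}^n_{\mathrm{L}}) = F(\mathrm{or}^n_{\mathrm{R}})$. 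First I would record that $Q \colon \mathbf{Ab}_{2,\mathrm{free}} \to \mathbf{Ab}_{2,n}$ is identity-on-objects and \emph{full}: a homomorphism $(\mathbb{Z}_n \oplus \mathbb{Z}_n)^{\otimes a} \to (\mathbb{Z}_n \oplus \mathbb{Z}_n)^{\otimes b}$ lifts to one out of the free module $(\mathbb{Z}\oplus\mathbb{Z})^{\otimes a}$ by choosing arbitrary integer lifts of the images of the free basis. A full, bijective-on-objects monoidal functor presents its target as the quotient by its kernel congruence $\ker Q$, so $\mathbf{Ab}_{2,n} \cong \mathbf{Ab}_{2,\mathrm{free}}/\ker Q$. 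The corollary therefore reduces to the single claim $R' = \ker Q$.

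The inclusion $R' \subseteq \ker Q$ is \textbf{soundness} of $\mathrm{or}^n$. Here I would compute $F$ of the two sides directly: the left-hand side is a loop built from an $(n{+}1)$-ary black comultiplication followed by an $(n{+}1)$-ary black multiplication, which acts as the identity on the $\ket{0}$-sector and multiplies the $\ket{1}$-sector by the number $n$ of loop strands; thus the two sides of $F(\mathrm{or}^n)$ have matrices congruent modulo $n$, and $Q$ (induced by $\mathbb{Z}\twoheadrightarrow\mathbb{Z}_n$) identifies them. Consequently the generating relation lies in the kernel congruence, so $R' \subseteq \ker Q$; equivalently, $F_n := Q \circ F_{\mathrm{ZW}}$ descends to a monoidal functor $F_n \colon \mathbf{SD}[T_c/\mathrm{ZW}_n] \to \mathbf{Ab}_{2,n}$.

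The reverse inclusion $\ker Q \subseteq R'$ is the \textbf{completeness} of the augmented rule set, and is the substance of the argument. I would prove it by producing a canonical mod-$n$ normal form and exhibiting it as an inverse to $F_n$. For $g \in \mathbf{Ab}_{2,n}$ with state $\psi_g = \sum_i \bar m_i \ket{b_i}$ written with distinct $b_i$ and residues $\bar m_i \in \{1,\dots,n-1\}$, let $N_n(g)$ be the corresponding (sign-free, positive-multiplicity) normal-form diagram. The equality $F_n N_n = \mathrm{id}$ follows from the already-established identity $FN = \mathrm{id}$ post-composed with $Q$, since $Q$ sends each integer residue to itself in $\mathbb{Z}_n$. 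For $N_n F_n = \mathrm{id}$ I would argue that every diagram rewrites to $N_n(F_n(\,\cdot\,))$: by Theorem~\ref{thm:completeness} any diagram first reaches the $\mathrm{ZW}_c$-normal form $N(F_{\mathrm{ZW}}(\,\cdot\,))$, and then $\mathrm{or}^n$ implements reduction of coefficients modulo $n$. Concretely, I would deloop (Lemma~\ref{lem:delooping}) to expose each multiplicity $m_i$ as $m_i$ parallel loops, apply $\mathrm{or}^n$ to strip $n$ loops at a time so that $m_i$ is reduced modulo $n$, use the relation $n\cdot x \sim 0$ together with negation (Lemma~\ref{lem:negation}) to convert any sign term $(-1)^{p_i}m_i$ into its positive residue in $\{0,\dots,n-1\}$, and finally eliminate every term whose residue is $0$ — now a nullary black vertex — by the Absorption lemma. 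Two $\mathrm{ZW}_c$-normal forms with equal $Q$-image then reduce to the identical canonical mod-$n$ form, so they are identified in $\mathbf{SD}[T_c/\mathrm{ZW}_n]$, giving $\ker Q \subseteq R'$. Monoidality of $N_n$ is inherited exactly as in the proof of the main completeness corollary, via Theorem~\ref{thm:plugging} applied to pluggings.

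The main obstacle I anticipate is precisely the reduction lemma in the last paragraph: showing that $\mathrm{or}^n$ \emph{uniformly} implements reduction modulo $n$ on every (pre-)normal form, independently of the chosen integer lift. The delicate points are the sign-to-positive conversion (rewriting $-\ket{b}$ as $(n-1)\ket{b}$ using $n\cdot x \sim 0$, which mixes $\mathrm{or}^n$ with the additive/negation bookkeeping) and the clean disappearance of residue-$0$ terms through absorption, all while certifying that the canonical representative is reached on the nose rather than merely some mod-$n$-equivalent diagram. Once this reduction is carried out carefully, $F_n$ and $N_n$ form the desired monoidal equivalence.
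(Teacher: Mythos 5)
Your route is exactly the paper's: the corollary is justified there by a single sentence, observing that the quotient functor $Q\colon \mathbf{Ab}_{2,\mathrm{free}} \to \mathbf{Ab}_{2,n}$ is induced by $\mathbb{Z} \twoheadrightarrow \mathbb{Z}_n$ and that $\mathrm{or}^n$ "is precisely the implementation, on diagrams in normal form, of the action of this functor". So your proposal is a fleshed-out version of the intended argument rather than a different one, and its skeleton is sound: fullness of $Q$, soundness of $\mathrm{or}^n$ (the left-hand side evaluates to $\ketbra{0}{0} + n\,\ketbra{1}{1}$ and the right-hand side to $\ketbra{0}{0}$, congruent mod $n$), the sign-free residue normal form $N_n$, $F_nN_n = \mathrm{id}$ via $FN = \mathrm{id}$, and monoidality via Theorem~\ref{thm:plugging}. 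Your closing worry about reaching the canonical representative "on the nose" also dissolves once you interleave renormalisation: each application of $\mathrm{or}^n$ has a computable effect on the $\mathbb{Z}$-semantics (it swaps a $\ketbra{0}{0}+n\,\ketbra{1}{1}$ sub-map for $\ketbra{0}{0}$), so after each application you may re-normalise over $\mathbb{Z}$ by Theorem~\ref{thm:completeness}, and uniqueness of the $\mathbb{Z}$-normal form certifies termination at exactly $N_n(F_n(\,\cdot\,))$.

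Two of your citations are wrong, however, and one step would fail if executed literally. First, Lemma~\ref{lem:negation} is the wrong tool for signs: "negation" there complements a plugged end's \emph{connections} to the white vertices, and has nothing to do with additive inverses, so it cannot turn $(-1)^{p_i}m_i$ into a positive residue. The correct move is: by $\mathbb{Z}$-completeness rewrite the $-m$ summand into a pre-normal pair of summands $(n-m)$ and $-n$ on the same outputs (since $-m = (n-m) - n$ over $\mathbb{Z}$), then delete the $-n$ summand using $\mathrm{or}^n$; the residual sign changer acting on a $\ket{0}$-cap is $\mathbb{Z}$-sound cleanup. Second, a residue-$0$ summand does \emph{not} become a nullary black vertex, and the Absorption lemma is the wrong rule to invoke: plugging the nullary black vertex (the scalar $0$) into a normal form annihilates the \emph{entire} diagram, whereas a vanishing summand must disappear while leaving the other summands intact. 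What actually appears when a multiplicity gadget is reduced to $0$ is the disconnected right-hand side of $\mathrm{or}^n$: a $\bra{0}$-cap on the corresponding leg of the bottom W vertex and a $\ket{0}$ fed into that white vertex; both are $\mathbb{Z}$-sound leg-removals, hence handled by $\mathrm{ZW}_c$-rewriting alone (Theorem~\ref{thm:completeness}; cf.\ the cancellation mechanism of Remark~\ref{remark:prenormal}). Finally, a small orientation slip: Lemma~\ref{lem:delooping} goes the wrong way for your purposes, since it \emph{eliminates} loops in favour of parallel unit summands, while an $\mathrm{or}^n$ redex needs the loops present — one carves it out of the $m_i$-loop gadget by splitting the black spiders with $\mathrm{sp}_W$ and inserting a pair of binary black vertices by rule $2a$. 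With these repairs, all made from the paper's own machinery, your argument goes through.
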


The case $n = 2$ is particularly interesting, for it becomes provable that
\begin{equation*}
	\input{img/6i_zw2.tikz}
\end{equation*}
these two rules, alone, can then replace $2b$, $3b$, $4$, $7a$, $7b$ and $\mathrm{X}$, leading to a significantly simplified calculus. In fact, even $\mathrm{or}^2$ becomes just a consequence of rule $5a$:
\begin{equation*}
	\input{img/6i_rule5a.tikz}
\end{equation*}
The category $\mathbf{Ab}_{2,2}$ was considered in \cite{schumacher2012modal} as a toy model of quantum theory - the theory of (pure) \emph{mobits}. The $\mathrm{ZW}_2$ calculus is a complete axiomatisation of it.

It is conceivable that the ZW calculus might be adapted to describe modules over more general rings and semirings. In the important example of $\mathbf{FRel}$, that is, modules over the semiring of Booleans, one obvious step would be to replace rule $5d$ with
\begin{equation*}
	\input{img/6j_frel.tikz}
\end{equation*}
however, there is no such clear substitute for rule $5a$.

\section{Conclusions and outlook}

In this paper, we set out to improve and complete the axiomatisation started in \cite{coecke2011ghz, kissinger2012pictures} of the relations between the GHZ and W 3-qubit quantum states.

This led us to a new diagrammatic calculus, the ZW calculus, of which we defined two equivalent versions: expanded and condensed. We proved the soundness and completeness of the ZW calculus with respect to an interpretation in $\mathbf{Ab}_{2,\mathrm{free}}$, the category of abelian groups and homomorphisms generated by $\mathbb{Z} \oplus \mathbb{Z}$ under tensoring, by describing a normal form for its condensed version, and an explicit normalisation procedure. With that, we also proved that the ZW calculus refines a version of the ZX calculus, while retaining its symmetry and simple algebraic characterisation.

While this result may have a certain conceptual interest by itself, it is but one small step in a wider programme, which can be carried on in several directions.

Our normalisation procedure was tailored to making the completeness proof short and perspicuous, but it is by no means an efficient one. One obvious next step would be to study and improve the computational properties of the ZW calculus, looking for clever rewrite strategies, equivalent rulesets with a better performance, and, possibly, different normal forms. 

Indeed, our normal form, devised for the sake of the completeness proof, is exactly as informative as the sum expression (\ref{eq:sum}), and has none of the advantages of the diagrammatic notation for states, such as representing their \emph{separability} as topological disconnectedness, so it might be worth exploring some alternatives. This may be done with the help of Quantomatic \cite{kissinger2014quantomatic}.

On a different subject, to make the ZW calculus more useful for calculations, one would need a way to boost it from the integers to real numbers (or approximations thereof), and interpret it in the category of real vector spaces. The fact that
\begin{equation*}
	\input{img/7a_numbers.tikz}
\end{equation*}
suggests that wires are already used for counting in the ZW calculus, in the only way they possibly can, being measureless: one each. Then, a possibility that comes to mind is adding wires with a \emph{signed measure} on top, and defining
\begin{equation*}
	\input{img/7b_measure.tikz}
\end{equation*}
for a wire that is ``long'' $\lambda \in (-\infty, +\infty)$; this is similar to the ZX calculus with arbitrary phases. From here, and in the direction of SLOCC classification, the next step would be reaching complex numbers. One could just proceed in a similar fashion, adding phases like in the ZX calculus; but, possibly, a further ``geometrisation'' of the ZW calculus will suggest unexpected, more natural ways of encoding complex phases.

However, as much as it is worth investigating extensions of the ZW calculus, the same is true of its fragments. One that we mentioned before is the monochromatic fragment, consisting of black vertices and crossings, and whose interpretation is restricted to purely even and purely odd maps. Eliminating the second colour, in a way, leaves us with pure topology, and the fact that a self-crossing wire corresponds to the ``sign changer'' hints at some specific topological phenomenon lurking behind.

Moreover, there are hints that this topology might already contain indications for SLOCC classification. For tripartite states, 
\begin{equation*}
	\input{img/7c_tripartite.tikz}
\end{equation*}
correspond to the two distinct maximally entangled SLOCC classes, and they very obviously have a different topology. Similarly, for quadripartite states, 
\begin{equation*}
	\input{img/7d_fourpartite.tikz}
\end{equation*}
are all representatives of distinct SLOCC super-classes, as defined in \cite{lamata2007inductive}, with the corresponding right singular subspace written below each diagram. We do not know, for now, exactly which states, in how many SLOCC classes, are expressible in the monochromatic language, but it might be worth tackling their classification first.

To end on a speculative note: the completeness of the expanded ZW calculus shows that $\mathbf{Ab}_{2,\mathrm{free}}$ is fully captured by undirected string diagrams with vertices of two colours - in fact, binary and ternary vertices suffice - and a few algebraically motivated axioms; and the ordering and directionality that are imposed by the categorical description come to be seen as redundant structure, over a simpler \emph{geometry of morphisms}. 

Taking this one step further, we wonder: is there an underlying geometry of the GHZ and W states that fully captures our axioms, in the way that the simpler theory of commutative Frobenius algebras is captured by 2-dimensional topological quantum field theories \cite{abrams1996two}? 

Understanding the compositional structure of multipartite entanglement is likely to involve an original interplay of algebra and geometry; monoidal categories, with their associated diagrammatic languages, might just provide the bridge that is needed.

\section*{Acknowledgment}

The author is supported by an EPSRC Doctoral Training Grant. Many thanks to Bob Coecke, Stefano Gogioso, Aleks Kissinger, Jamie Vicary, and Linde Wester for useful discussions and suggestions. All diagrams were drawn with TikZiT \cite{kissinger2014tikzit}, whose developers also have the author's gratitude.

\bibliographystyle{IEEEtran}
\bibliography{IEEEabrv,bibliography}

\end{document}